\newtheorem{thm}{Theorem}
\newtheorem{prop}[thm]{Proposition}
\newtheorem{lem}[thm]{Lemma}
\newtheorem{cor}[thm]{Corollary}
\theoremstyle{remark}
\newcommand{\FF}{\mathbb{F}}
\newcommand{\ZZ}{\mathbb{Z}}
\newcommand{\RR}{\mathbb{R}}
\DeclareMathOperator{\wt}{wt}
\DeclareMathOperator{\inv}{inv}
\begin{document}
\title{
Self-dual codes over $\FF_5$ and $s$-extremal unimodular lattices
}

\author{
Masaaki Harada\thanks{
Research Center for Pure and Applied Mathematics,
Graduate School of Information Sciences,
Tohoku University, Sendai 980--8579, Japan.
email: \texttt{mharada@tohoku.ac.jp}.}
}

\maketitle

\begin{abstract}
New $s$-extremal extremal unimodular lattices in dimensions $38$, $40$, $42$
and $44$ are constructed from self-dual codes over $\FF_5$ 
by Construction~A\@.
In the process of constructing these codes, 
we obtain a self-dual $[44,22,14]$ code over $\FF_5$.
In addition, the code implies
a $[43,22,13]$ code over $\FF_5$.
These codes have larger minimum weights than the 
previously known $[44,22]$ codes and $[43,22]$ codes, respectively.
\end{abstract}

\section{Introduction}\label{Sec:Intro}

A (linear) $[n,k]$ code $C$ over $\FF_5$ is a $k$-dimensional subspace of 
$\FF_5^n$, where $\FF_5$ is the finite field of order $5$.
The {dual} code $C^{\perp}$ of $C$ is defined as
$\{x \in \FF_5^n \mid x \cdot y = 0 \text{ for all } y \in C\}$,
where $x \cdot y$ is the standard inner product.
A code $C$ is  called {self-dual} if $C = C^{\perp}$. 
Self-dual codes are one of the most interesting classes of codes.
This interest is justified by many combinatorial objects
and algebraic objects related to self-dual codes.
Unimodular lattices are one of the objects related to self-dual codes.
In addition, 
there are many similarities between self-dual codes and 
unimodular lattices (see~\cite{SPLAG}).
Construction A, which is the most important construction of unimodular lattices
from self-dual codes, gives some similarities.

Extremal unimodular lattices are
unimodular lattices meeting a certain upper bound on minimum norms,
and 
$s$-extremal unimodular lattices are
odd unimodular lattices meeting a certain upper bound on minimum norms
of the shadows. 
In this paper,
new $s$-extremal extremal unimodular lattices in dimensions $38,40,42$
and $44$
are constructed from self-dual codes over $\FF_5$ 
by Construction~A\@.
In the process of constructing the above self-dual codes, 
we obtain a self-dual $[44,22,14]$ code over $\FF_5$.
This code has larger minimum weight than the 
previously known self-dual $[44,22]$ codes.
It is a fundamental problem to determine the largest minimum weight 
$d_5(n)$ among all self-dual  $[n,n/2]$ codes over $\FF_5$ for
a given $n$.
Much work has been done concerning this fundamental problem
(see e.g.~\cite{GaOt}, \cite{GL}, \cite{GG}, \cite{GH-GF5}, \cite{GH08},
\cite{GHM} and \cite{LPS-GF5}).
In addition, the new self-dual code implies
a $[43,22,13]$ code over $\FF_5$, which has larger minimum weight than the 
previously known $[43,22]$ codes.

This paper is organized as follows.
In Section~\ref{sec:2},
we give some definitions, notations and basic results used in this paper.
Two methods for constructing self-dual codes are given, namely, 
quasi-twisted codes and four-negacirculant codes.
In Section~\ref{sec:44},
new $s$-extremal extremal unimodular lattices in dimension $44$ are constructed from 
four-negacirculant self-dual codes over $\FF_5$ by Construction~A\@.
In the process of constructing these codes, 
we obtain a self-dual $[44,22,14]$ code $C_{44}$ over $\FF_5$.
This code has larger minimum weight than the 
previously known self-dual codes.
We emphasize that $C_{44}$ is the first example as not only
self-dual $[44,22,14]$ codes but also (linear) $[44,22,14]$ codes.
In addition, $C_{44}$ implies
a $[43,22,13]$ code over $\FF_5$, which has larger minimum weight than the 
previously known $[43,22]$ codes.
Section~\ref{sec:44} also lists the current information on
the largest minimum weight among self-dual $[n,n/2]$ codes over $\FF_5$ for
$22 \le n \le 72$.
In Section~\ref{sec:lat},
new $s$-extremal extremal unimodular lattices in dimensions $38,40$ and $42$
are constructed from self-dual codes over $\FF_5$
by Construction~A\@.
These self-dual codes are constructed as
quasi-twisted codes or four-negacirculant codes.
As a summary, 
we list the current information on the existence of 
non-isomorphic
$s$-extremal unimodular lattices with minimum norm $4$.

All computer calculations in this paper were
done by programs in \textsc{Magma}~\cite{Magma}.

\section{Preliminaries}\label{sec:2}

In this section, we give definitions, notations and
basic results used in this paper.

\subsection{Unimodular lattices}
A (Euclidean) lattice $L \subset \RR^n$ 
in dimension $n$
is \emph{unimodular} if
$L = L^{*}$, where
the dual lattice $L^{*}$ of $L$ is defined as
$\{ x \in {\RR}^n \mid x \cdot y \in \ZZ \text{ for all }
y \in L\}$ under the standard inner product $ x\cdot  y$.
A unimodular lattice $L$ is \emph{even} 
if the norm $x \cdot x$ of every vector $x$ of $L$ is even,
and \emph{odd} otherwise.
An even unimodular lattice in dimension $n$
exists if and only if $n$ is divisible by eight, while
an odd  unimodular lattice exists for every dimension.
The \emph{minimum norm} $\min(L)$ of 
a unimodular lattice $L$ is the smallest 
norm among all nonzero vectors of $L$.
The \emph{kissing number} of $L$ is the number of vectors of minimum norm in $L$.
Two lattices $L$ and $L'$ are \emph{isomorphic}, denoted $L \cong L'$,
if there is an orthogonal matrix $A$ with
$L' = \{xA \mid x \in L\}$.

Let $L$ be an odd unimodular lattice in dimension $n$.
The \emph{shadow} $S(L)$ of $L$ is defined to be $S(L)= L_0^* \setminus L$,
where $L_0$ denotes the even sublattice of $L$.
Shadows of odd unimodular lattices appeared in~\cite{CS98} 
(see also~\cite[p.~440]{SPLAG}),
and shadows play an important role in the study of odd unimodular lattices.
The {\em theta series} of an odd unimodular lattice $L$ and its shadow $S(L)$
are the formal power series
$\theta_{L}(q) = \sum_{x \in L} q^{x \cdot x}$ and
$\theta_{S(L)}(q) = \sum_{x \in S(L)} q^{x \cdot x}$, respectively.
Conway and Sloane~\cite{CS98} showed that
when the theta series of an odd unimodular lattice $L$
in dimension $n$
is written as:
\begin{equation}
\label{eq:theta}
 \sum_{j =0}^{\lfloor n/8\rfloor} a_j\theta_3(q)^{n-8j}\Delta_8(q)^j,
\end{equation}
the theta series of the shadow $S(L)$
is written as:
\begin{equation}
\label{eq:theta-S}
\sum_{j=0}^{\lfloor n/8\rfloor}
\frac{(-1)^j}{16^j} a_j\theta_2(q)^{n-8j}\theta_4(q^2)^{8j},
\end{equation}
where
$\Delta_8(q) = q \prod_{m=1}^{\infty} (1 - q^{2m-1})^8(1-q^{4m})^8$,
and $\theta_2(q), \theta_3(q)$, $\theta_4(q)$ are the Jacobi
theta series~\cite{SPLAG}.

\subsection{Extremal unimodular lattices and $s$-extremal unimodular lattices}

By considering the theta series of odd unimodular lattices and their shadows,
Rains and Sloane~\cite{RS-bound} showed that
the minimum norm $\min(L)$ of an odd unimodular
lattice $L$ in dimension $n$
is bounded by
\begin{equation}\label{eq:RS}
\min(L) \le 
\begin{cases}
2 \lfloor \frac{n}{24} \rfloor+2 &\text{ if } n \ne 23, \\
3 &\text{ if } n = 23.
\end{cases}
\end{equation}
A unimodular lattice meeting the bound~\eqref{eq:RS} with equality is called 
\emph{extremal}.
In addition, it was shown in~\cite{Ga07} that 
\begin{equation}\label{eq:B}
\begin{array}{rcll}
4\min(S(L))&=&15 & \text{ if } n=23 \text{ and }\min(L)=3, \\ 
8 \min(L) + 4 \min(S(L)) &\le& 8+n &\text{ otherwise},
\end{array}
\end{equation}
where $\min(S(L))$ 
denotes the minimum norm of $S(L)$, that is,
the smallest norm among all vectors of $S(L)$.
An odd unimodular lattice meeting the bound~\eqref{eq:B} with equality is called 
\emph{$s$-extremal}.
If an $s$-extremal unimodular lattice $L$ in dimension $n$ having
$\min(L)=4$ exists, then 
$n \in \{32,36,37,\ldots,47\}$~\cite[p.~148]{Ga07}.

\subsection{Invariants $\inv(L)_{i}$}\label{sec:inv}
For a given unimodular lattice $L$ in dimension $n$ having $\min(L)=4$,
let $L(4)$ denote the set of vectors of norm $4$ in $L$.
There is a subset $L(4)^+$ of $L(4)$ such that
\[
L(4)=L(4)^+ \cup L(4)^- \text{ and }
L(4)^+ \cap L(4)^- = \emptyset,
\]
where 
$L(4)^-=\{-x \mid x \in L(4)^+\}$.
For a given unimodular lattice $L$ and a nonnegative integer $i$,
we define
\[
\inv(L)_{i}=|\{ (x,y) \in L(4)^+ \times L(4)^+ \mid x \cdot y \in \{i,-i\}\}|.
\]
It is trivial that  $\inv(L)_{i}=\inv(L')_{i}$ if $L \cong L'$
for a nonnegative integer $i$.

\begin{lem}\label{lem:L4}
$\inv(L)_{3}=0$, $\inv(L)_{4}=|L^+(4)|$ and 
$\inv(L)_{0}+\inv(L)_{1}+\inv(L)_{2}=|L^+(4)|^2-|L^+(4)|$.
\end{lem}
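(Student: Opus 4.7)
The plan is to treat the three assertions separately using only the fact that $\min(L)=4$, the Cauchy--Schwarz inequality for norm-$4$ vectors, and the partition $L(4)=L(4)^+\sqcup L(4)^-$ with $L(4)^-=-L(4)^+$.

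First I would establish the basic range of inner products. For any $x,y\in L(4)$ we have $|x\cdot y|\le \sqrt{(x\cdot x)(y\cdot y)}=4$, so $x\cdot y$ is an integer in $\{-4,-3,\dots,3,4\}$. This means the pairs $(x,y)\in L(4)^+\times L(4)^+$ are partitioned according to $|x\cdot y|\in\{0,1,2,3,4\}$, and hence
\[
\inv(L)_0+\inv(L)_1+\inv(L)_2+\inv(L)_3+\inv(L)_4=|L(4)^+|^2.
\]
Once I have (i) and (ii), assertion (iii) follows immediately from this identity.

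For (ii) I would use the equality case of Cauchy--Schwarz: $x\cdot y=\pm 4$ forces $y=\pm x$. Since $L(4)^+\cap L(4)^-=\emptyset$, the choice $y=-x$ is excluded, so the only contributing pairs are the diagonal $(x,x)$ with $x\in L(4)^+$, giving $\inv(L)_4=|L(4)^+|$.

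For (i), the main observation is that $(x\mp y)\cdot(x\mp y)=8\mp 2\,x\cdot y$. If $x\cdot y=3$, then $x-y$ is a vector of $L$ of norm $2$; if $x\cdot y=-3$, then $x+y$ is a vector of $L$ of norm $2$. In either case this contradicts $\min(L)=4$ (the vector cannot be $0$ since then $x\cdot y=\pm 4$, not $\pm 3$). Hence $\inv(L)_3=0$. I do not anticipate a real obstacle here; the only mild point to be careful about is distinguishing the signs of $x\cdot y$ when computing $(x\pm y)\cdot(x\pm y)$ so that one actually produces a short vector in each case.
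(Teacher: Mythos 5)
Your proof is correct and takes essentially the same route as the paper: the decisive step, ruling out $x\cdot y=\pm 3$ via $(x\pm y)\cdot(x\pm y)=8\pm 2\,x\cdot y$ against $\min(L)=4$, is exactly the paper's argument, and the third assertion is the same trivial counting. The only cosmetic difference is that you invoke Cauchy--Schwarz and its equality case for the bound $|x\cdot y|\le 4$ and the $\pm 4$ characterization, where the paper extracts both from the same norm computation; in either phrasing, note that the integrality of $x\cdot y$ rests on $L$ being unimodular.
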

\begin{proof}
Let $x$ and $y$ be vectors of $L(4)$.
By considering $(x+y) \cdot(x+y)$ and $(x-y) \cdot(x-y)$,
it follows that 
$-2 \le x \cdot y \le 2$ if $x \not \in \{y,-y\}$, and 
it follows that 
$x \in \{y,-y\}$ if and only if $x \cdot y \in \{-4,4\}$.
The last assertion is trivial.
\end{proof}

Hence, it is sufficient to consider only
$\inv(L)_{0}$ and $\inv(L)_{1}$. 

\subsection{Self-dual codes and Construction A}
Let $\FF_5$ be the finite field of order $5$.
Throughout this paper, 
we take the elements of $\FF_5$ to be either
$0,1,2,3,4$ or $0,\pm1, \pm 2$,
using whichever form is more convenient.
A (linear) $[n,k]$ \emph{code} $C$ over $\FF_5$ is a $k$-dimensional subspace of 
$\FF_5^n$.
All codes in this paper mean linear codes 
and we omit the term linear.
The \emph{weight} $\wt(x)$ of a vector $x$ of $\FF_5^n$ is 
the number of non-zero components of $x$.
A vector of $C$ is called a \emph{codeword}.
The minimum non-zero weight of all codewords in $C$ is called
the \emph{minimum weight} of $C$.
An $[n,k]$ code with minimum weight $d$ is called an $[n,k,d]$ code.
The \emph{weight enumerator} of $C$ is $\sum_{c \in C} y^{\wt(c)}$.

The \emph{dual} code $C^{\perp}$ of an $[n,k]$ code $C$ over $\FF_5$ is defined as
\[
C^{\perp}=
\{x \in \FF_5^n \mid x \cdot y = 0 \text{ for all } y \in C\},
\]
where $x \cdot y$ is the standard inner product.
A code $C$ is  called \emph{self-dual} if $C = C^{\perp}$. 
If an $[n,k]$ code is self-dual, then it is trivial that $k=n/2$.
It is a fundamental problem to determine the largest minimum weight 
$d_5(n)$ among all self-dual  $[n,n/2]$ codes over $\FF_5$ for
a given $n$.

Let $C$ be a self-dual $[n,n/2]$ code over $\FF_5$.
Then the following lattice
\[
A_{5}(C) = \frac{1}{\sqrt{5}}
\{(x_1,\ldots,x_n) \in \ZZ^n \mid
(x_1 \!\!\!\pmod 5,\ldots,x_n \!\!\!\pmod 5)\in C\}
\]
is a unimodular lattice in dimension $n$.
This construction of lattices is well known as \emph{Construction~A}.

\subsection{Self-dual codes constructed from negacirculant matrices}

An $n \times n$
\emph{negacirculant} matrix has the following form
\[
\left( \begin{array}{ccccc}
r_0&r_1&r_2& \cdots &r_{n-1} \\
-r_{n-1}&r_0&r_1& \cdots &r_{n-2} \\
-r_{n-2}&-r_{n-1}&r_0& \cdots &r_{n-3} \\
\vdots &\vdots & \vdots && \vdots\\
-r_1&-r_2&-r_3& \cdots&r_0
\end{array}
\right).
\]
Throughout this paper, 
let $I_n$ denote the identity matrix of order $n$
and 
let $A^T$ denote the transpose of a matrix $A$.

Let $A$ be an $n \times n$ negacirculant matrix.
A $[2n,n]$ code over
$\FF_5$ having the following generator matrix
\begin{equation} \label{eq:2}
\left(
\begin{array}{cc}
I_{n} & A
\end{array}
\right)
\end{equation}
is called a \emph{quasi-twisted} code
or a double twistulant code.
Let $A$ and $B$ be $n \times n$ negacirculant matrices.
A $[4n,2n]$ code over
$\FF_5$ having the following generator matrix
\begin{equation} \label{eq:4}
\left(
\begin{array}{ccc@{}c}
\quad & {\Large I_{2n}} & \quad &
\begin{array}{cc}
A & B \\
-B^T & A^T
\end{array}
\end{array}
\right)
\end{equation}
is called a \emph{four-negacirculant} code.
Many quasi-twisted self-dual codes and four-negacirculant 
self-dual codes with large minimum weights are known 
(see e.g.~\cite{GH08}, \cite{GHM}, \cite{H15} and \cite{HHKK}).

\section{New $s$-extremal extremal unimodular lattices in dimension $44$
and new self-dual codes over $\FF_5$}
\label{sec:44}

In this section,
new $s$-extremal extremal unimodular lattices in dimension $44$
are constructed from four-negacirculant self-dual codes over $\FF_5$ 
by Construction~A\@.
In the process of constructing these codes, 
we obtain a self-dual $[44,22,14]$ code over $\FF_5$.
This code has larger minimum weight than the 
previously known $[44,22]$ codes.
In addition, the new self-dual code implies
a $[43,22,13]$ code over $\FF_5$, which has larger minimum weight than the 
previously known $[43,22]$ codes.

\subsection{New $s$-extremal extremal unimodular lattices in dimension $44$}

Let $L_{44}$ be an odd unimodular lattice 
in dimension $44$ having minimum norm $4$.
By using~\eqref{eq:theta} and~\eqref{eq:theta-S},
the possible theta series 
of $L_{44}$ and its shadow $S(L_{44})$ are determined as follows:
\begin{align*}
\theta_{L_{44}}(q) &=
1 
+ (6600 + 16 \alpha) q^4 
+ (811008  - 128 \alpha- 65536 \beta) q^5 
+ \cdots \text{ and}
\\
\theta_{S(L_{44})}(q) &=
\beta q
+(\alpha - 76 \beta) q^3
+ (1622016 - 52 \alpha + 2806 \beta) q^5 
+ \cdots,
\end{align*}
respectively,
where $\alpha$ and $\beta$ are nonnegative integers~\cite{H11}.
It turns out that $L_{44}$ has kissing number $6600$ 
if and only if $L_{44}$ is $s$-extremal.
Two $s$-extremal extremal unimodular lattices
in dimension $44$ are previously known.
More precisely,
$A_5(C_{44})$ in~\cite{H11} and 
$A_5(C_{44,5}(D_{22}))$ in~\cite{H15} are the known lattices, and we
denote the lattices by $L_{44,1}$ and $L_{44,2}$, respectively.
We remark that the two lattices are constructed from some self-dual codes
over $\FF_5$ by Construction~A\@.

We calculated by \textsc{Magma} the invariants $\inv(L)_{i}$ 
$(i=0,1)$ given in Section~\ref{sec:inv}
for $L=L_{44,1}$ and $L_{44,2}$,
where the results are listed in Table~\ref{Tab:inv}.
This was done using the \textsc{Magma} function \texttt{ShortVectors}.
From the table, it holds that
the two lattices $L_{44,1}$ and $L_{44,2}$ are non-isomorphic.


\begin{table}[th]
\caption{$\inv(L)_{i}$ $(i=0,1)$ for $L=L_{44,1}$ and $L_{44,2}$}
\label{Tab:inv}
\centering
\medskip
{\small
\begin{tabular}{c|cc}
\noalign{\hrule height1pt}
$L$ & $\inv(L)_{0}$ & $\inv(L)_{1}$ \\
\hline
$L_{44,1}$&7097112&3750384\\
$L_{44,2}$&7089192&3760944\\
\noalign{\hrule height1pt}
\end{tabular}
}
\end{table}

In order to construct new $s$-extremal extremal odd unimodular lattices,
we tried to find four-negacirculant
self-dual $[44,22]$ codes over $\FF_5$.
Then we found $50$ codes $C_{44,i}$ satisfying the condition that
$A_5(C_{44,i})$ have minimum norm $4$ and kissing number $6600$ and 
the condition that
\[
|\{(\inv(L)_{0},\inv(L)_{1}) \mid L \in \mathcal{L}\}| =52,
\]
where
$\mathcal{L}=\{L_{44,1}, L_{44,2}\} \cup 
\{A_5(C_{44,i}) \mid i \in \{1,2,\ldots,50\}\}$.
The self-duality was verified by
the \textsc{Magma} function \texttt{IsSelfDual}.
The minimum norm and the kissing number
were calculated by
the \textsc{Magma} functions \texttt{Minimum} and \texttt{KissingNumber},
respectively.
For $L=A_5(C_{44,i})$,
the results $\inv(L)_{j}$ $(j=0,1)$
are listed in Table~\ref{Tab:inv442}.
From Tables~\ref{Tab:inv} and \ref{Tab:inv442},
we have the following:

\begin{lem}
The $52$ lattices
$L_{44,1}$, $L_{44,2}$ and 
$A_5(C_{44,i})$ $(i=1,2,\ldots,50)$
are non-isomorphic.
\end{lem}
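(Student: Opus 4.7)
The plan is to exploit the invariants $\inv(L)_i$ defined in Section~\ref{sec:inv} together with the distinctness condition built into the construction of the codes $C_{44,i}$. The key observation, already noted immediately after the definition, is that $\inv(L)_i$ depends only on the isomorphism class of $L$, so the pair $(\inv(L)_0,\inv(L)_1)$ is an isomorphism invariant of any unimodular lattice $L$ with $\min(L)=4$.

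First I would recall why it is enough to look only at the two invariants with $i=0,1$. Each of the 52 lattices in $\mathcal{L}$ is an $s$-extremal extremal odd unimodular lattice in dimension $44$, so by the theta-series computation they all share kissing number $6600$, and hence $|L(4)^+|=3300$ is the same for every $L\in\mathcal{L}$. By Lemma~\ref{lem:L4}, $\inv(L)_3=0$ and $\inv(L)_4=|L(4)^+|$ are therefore constant across $\mathcal{L}$, and $\inv(L)_2$ is determined by $\inv(L)_0+\inv(L)_1$ through the identity $\inv(L)_0+\inv(L)_1+\inv(L)_2=|L(4)^+|^2-|L(4)^+|$. Thus the only invariant information remaining in the tuple $(\inv(L)_i)_{i\ge 0}$ is encoded in the pair $(\inv(L)_0,\inv(L)_1)$.

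Second, I would invoke the defining property of the family $\{C_{44,1},\ldots,C_{44,50}\}$: these codes were selected precisely so that the $52$ pairs
\[
(\inv(L)_0,\inv(L)_1),\quad L\in\mathcal{L},
\]
are pairwise distinct. This is the condition
\[
|\{(\inv(L)_0,\inv(L)_1)\mid L\in\mathcal{L}\}|=52
\]
used in the construction, and the concrete values are tabulated in Tables~\ref{Tab:inv} and~\ref{Tab:inv442}. If two of the 52 lattices were isomorphic, their invariant pairs would coincide, contradicting this distinctness. Hence the lattices are pairwise non-isomorphic, proving the lemma.

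There is essentially no mathematical obstacle here; the substantive work has already been done in producing the $50$ codes and in computing the $52$ invariant pairs in \textsc{Magma}. The proof reduces to citing the isomorphism-invariance of $\inv(L)_i$, using Lemma~\ref{lem:L4} to justify restricting attention to $i\in\{0,1\}$, and reading off the fact that the tables display $52$ distinct pairs.
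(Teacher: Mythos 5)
Your proof is correct and follows essentially the same route as the paper: it relies on the isomorphism-invariance of $\inv(L)_i$ together with the fact that the $52$ pairs $(\inv(L)_0,\inv(L)_1)$ tabulated in Tables~\ref{Tab:inv} and~\ref{Tab:inv442} are pairwise distinct. The extra remark explaining via Lemma~\ref{lem:L4} why only $i=0,1$ need be considered is harmless (and matches the paper's own comment), though it is not logically required for the non-isomorphism conclusion.
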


\begin{table}[thb]
\caption{$\inv(L)_{j}$ $(j=0,1)$ for $L=A_5(C_{44,i})$}
\label{Tab:inv442}
\centering\medskip
{\footnotesize
\begin{tabular}{c|cc||c|cc||c|cc}
\noalign{\hrule height1pt}
$i$ & $\inv(L)_{0}$ & $\inv(L)_{1}$ & 
$i$ & $\inv(L)_{0}$ & $\inv(L)_{1}$ & 
$i$ & $\inv(L)_{0}$ & $\inv(L)_{1}$ \\
\hline
 1&7068600& 3788400 &18&7083252& 3768864 &35&7092756& 3756192 \\
 2&7071372& 3784704 &19&7083648& 3768336 &36&7093152& 3755664 \\
 3&7074144& 3781008 &20&7084440& 3767280 &37&7093548& 3755136 \\
 4&7074540& 3780480 &21&7085232& 3766224 &38&7093944& 3754608 \\
 5&7074936& 3779952 &22&7085628& 3765696 &39&7094340& 3754080 \\
 6&7075332& 3779424 &23&7086420& 3764640 &40&7094736& 3753552 \\
 7&7076520& 3777840 &24&7086816& 3764112 &41&7095132& 3753024 \\
 8&7077312& 3776784 &25&7087212& 3763584 &42&7095528& 3752496 \\
 9&7078104& 3775728 &26&7087608& 3763056 &43&7096320& 3751440 \\
10&7078500& 3775200 &27&7088004& 3762528 &44&7097904& 3749328 \\
11&7078896& 3774672 &28&7089588& 3760416 &45&7098300& 3748800 \\
12&7079292& 3774144 &29&7089984& 3759888 &46&7100676& 3745632 \\
13&7079688& 3773616 &30&7090380& 3759360 &47&7103448& 3741936 \\
14&7081272& 3771504 &31&7091172& 3758304 &48&7105824& 3738768 \\
15&7081668& 3770976 &32&7091568& 3757776 &49&7107012& 3737184 \\
16&7082460& 3769920 &33&7091964& 3757248 &50&7107804& 3736128 \\
17&7082856& 3769392 &34&7092360& 3756720 &  &       &         \\
\noalign{\hrule height1pt}
\end{tabular}
}
\end{table}

For the $50$ codes $C_{44,i}$, 
the first rows $r_A(C_{44,i})$ and $r_B(C_{44,i})$
of the negacirculant matrices $A$ and $B$ in
the generator matrices of form~\eqref{eq:4} are listed in Table~\ref{Tab:44}.

For an odd unimodular lattice $L$, 
there are cosets $L_1,L_2,L_3$ of $L_0$ such that
$L_0^* = L_0 \cup L_1 \cup L_2 \cup L_3$, where
$L = L_0  \cup L_2$ and $S(L) = L_1 \cup L_3$.
Two lattices $L$ and $L'$ are \emph{neighbors} if
both lattices contain a sublattice of index $2$
in common.
If $L$ is an odd unimodular lattice in dimension $n$ and $n$ is a multiple of four, 
then there are two unimodular lattices
containing $L_0$,
which are rather than $L$,
namely, $L_0 \cup L_1$ and $L_0 \cup L_3$ (see~\cite{DHS}).
Note that the two neighbors are even if $n$ is a multiple of eight.
We denote the two unimodular neighbors by
\[
N_1(L)=L_0 \cup L_1 \text{ and } N_2(L)=L_0 \cup L_3.
\]

\begin{lem}
Let $L$ be an $s$-extremal extremal unimodular lattice in dimension $44$.
Then $N_1(L)$ and $N_2(L)$ are also 
$s$-extremal extremal unimodular lattices.
\end{lem}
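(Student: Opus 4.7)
The plan is to deduce the minima of $N_1(L)$, $N_2(L)$ and their shadows from the coset structure of $L_0^*/L_0$. Since $L$ is extremal we have $\min(L) = 4$, and since $L$ is additionally $s$-extremal, the bound~\eqref{eq:B} forces $\min(S(L)) = (52-32)/4 = 5$. Moreover, because $L$, $N_1(L)$ and $N_2(L)$ are three distinct index-$2$ unimodular sublattices of $L_0^*$ lying above $L_0$, the quotient $L_0^*/L_0$ has three subgroups of order $2$ and hence $L_0^*/L_0 \cong (\ZZ/2\ZZ)^2$.

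Next I would pin down the norm parities on the four cosets. The parity of $x \cdot x$ is well defined on any coset of $L_0$, since two representatives differ by a vector of even norm. Vectors in $L_0$ have even norm, and $L_2 = L \setminus L_0$ has odd norms because $L$ is odd; combined with $\min(L) = 4$ this forces $\min(L_2) \geq 5$. For $n = 44$ the shadow theta series from~\eqref{eq:theta-S} involves only odd powers of $q$, so $L_1$ and $L_3$ also consist of odd-norm vectors, and the assumption $\min(S(L)) = 5$ then gives $\min(L_1), \min(L_3) \geq 5$.

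Now consider $N_1(L) = L_0 \cup L_1$. It is odd because $L_1$ contains odd-norm vectors, so its even sublattice is $(N_1(L))_0 = L_0$ and $(N_1(L))_0^* = L_0^*$. Hence $S(N_1(L)) = L_0^* \setminus N_1(L) = L_2 \cup L_3$. From the previous paragraph $\min(N_1(L)) = \min(\min(L_0), \min(L_1)) = 4$, so $N_1(L)$ is extremal with kissing number $|L_0(4)| = |L(4)| = 6600$. Also $\min(S(N_1(L))) \geq 5$, while the bound~\eqref{eq:B} applied to the extremal lattice $N_1(L)$ yields $\min(S(N_1(L))) \leq 5$; combining these two inequalities forces equality, and $N_1(L)$ is therefore $s$-extremal. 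The argument for $N_2(L) = L_0 \cup L_3$ is identical with the roles of $L_1$ and $L_3$ exchanged.

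The only delicate point is verifying that every non-trivial coset of $L_0$ has minimum norm at least $5$; once this is in place, the rest is a direct application of the definitions of extremality and $s$-extremality and of the bound~\eqref{eq:B}.
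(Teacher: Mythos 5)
Your proposal is correct and takes essentially the same route as the paper's proof: identify the shadows of $N_1(L)$ and $N_2(L)$ as $L_2\cup L_3$ and $L_2\cup L_1$, note that all three nontrivial cosets of $L_0$ have minimum norm at least $5$ while $L_0$ retains the norm-$4$ vectors, and conclude extremality and $s$-extremality from the bounds \eqref{eq:RS} and \eqref{eq:B}. You simply spell out details (coset norm parities, the even sublattice of the neighbors) that the paper leaves implicit.
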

\begin{proof}
Since $L$ is an $s$-extremal extremal unimodular lattice in dimension $44$,
the minimum norm of the shadow $S(L)$ is $5$.
Thus, $N_1(L)$ and $N_2(L)$ have minimum norm $4$.
In addition, the shadows of $N_1(L)$ and $N_2(L)$ 
are $L_2 \cup L_3$ and $L_2 \cup L_1$, respectively.
Hence, $\min(S(N_1(L))=\min(S(N_2(L))=5$.
The result follows.
\end{proof}

By the above lemma, 
more $s$-extremal extremal unimodular lattices in dimension $44$
are constructed as 
$N_j(L_{44,1})$, $N_j(L_{44,2})$ and
$N_j(A_5(C_{44,i}))$ $(i=1,2,\ldots,50)$
$(j=1,2)$. 

\begin{lem}\label{lem:inv}
Let $L$ and $L'$ be $s$-extremal extremal unimodular lattices in dimension $44$.
If $\inv(L)_{0}\ne \inv(L')_{0}$ or $\inv(L)_{1}\ne \inv(L')_{1}$,
then $L \not \cong N_1(L')$ and $L \not \cong N_2(L')$.
\end{lem}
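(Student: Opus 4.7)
The plan is to argue the contrapositive: I will show that $\inv(N_j(L'))_i = \inv(L')_i$ for $j \in \{1,2\}$ and every nonnegative integer $i$. Combined with the fact, noted right after the definition of $\inv$, that $\inv(\cdot)_i$ is an isomorphism invariant, this implies that if $L \cong N_j(L')$ for some $j$, then $\inv(L)_i = \inv(L')_i$ for all $i$, which is precisely the contrapositive of the stated lemma.

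First I would record the consequence of $s$-extremality: since $L'$ has minimum norm $4$ and dimension $44$, equality in~\eqref{eq:B} forces $8\cdot 4 + 4\min(S(L')) = 8+44$, hence $\min(S(L'))=5$. Writing the decomposition $L'^{\,*}_0 = L'_0 \cup L'_1 \cup L'_2 \cup L'_3$ with $L' = L'_0 \cup L'_2$ and $S(L') = L'_1 \cup L'_3$, I observe that vectors of $L'_2$ have odd norms (since $L'_0$ is the even sublattice), so $L'(4) = L'_0(4)$. Likewise, $L'_1$ and $L'_3$ are contained in $S(L')$ and therefore contain no vector of norm at most $4$; hence $N_j(L')(4) = L'_0(4) = L'(4)$ for $j=1,2$.

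Once the sets of norm-$4$ vectors of $L'$ and $N_j(L')$ coincide, one may select the same half $L'(4)^{+}$ for both lattices. The inner products $x \cdot y$ in $\RR^{44}$ are intrinsic to the ambient space, so the counts defining $\inv(L')_i$ and $\inv(N_j(L'))_i$ are identical for every $i$, completing the argument. The only step that might look substantive is the identification $\min(S(L'))=5$, but this is immediate from the $s$-extremal condition and from the fact that the shadow of an odd unimodular lattice in dimension $44$ carries only odd-norm vectors; so I do not foresee any genuine obstacle.
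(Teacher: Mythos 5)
Your argument is correct and matches the paper's own proof: both rest on the observation that $s$-extremality forces $\min(S(L'))=5$, so the norm-$4$ vectors of $L'$ and of $N_j(L')$ coincide (the paper states this in one line; you additionally spell out $L'(4)=L'_0(4)$ via the even sublattice), and then invoke the isomorphism-invariance of $\inv(\cdot)_i$. No gap; same route, just with more detail.
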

\begin{proof}
Since the minimum norm of the shadow of $L'$ is $5$,
the two sets of vectors of norm $4$ in $L'$ and $M$ are identical
for $M=N_1(L')$ and $N_2(L')$.
The result follows.
\end{proof}

In addition, we verified that the three lattices
$L,N_1(L)$ and $N_2(L)$ are non-isomorphic
for each lattice $L=L_{44,1}$, $L_{44,2}$
and
$A_5(C_{44,i})$ $(i=1,2,\ldots,50)$.
The neighbors $N_1(L)$ and $N_2(L)$ were constructed using
the \textsc{Magma} functions \texttt{EvenSublattice} and \texttt{Dual}.
In addition, 
the non-isomorphisms were verified by the \textsc{Magma}
function \texttt{IsIsomorphic}.
By Lemma~\ref{lem:inv}, we have the following:

\begin{prop}\label{prop:L}
There are at least $156$ non-isomorphic 
$s$-extremal extremal unimodular lattices in dimension $44$.
\end{prop}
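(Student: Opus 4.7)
The plan is to assemble a list of $156$ candidate $s$-extremal extremal unimodular lattices and verify pairwise non-isomorphism using two separate tools: the invariants $\inv(L)_{i}$ for lattices in different families, and direct \textsc{Magma} isomorphism tests for lattices sharing the same invariant pair.

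First, I would form the $52$ triples $\{L,N_1(L),N_2(L)\}$ as $L$ ranges over $\mathcal{L}=\{L_{44,1},L_{44,2}\}\cup\{A_5(C_{44,i}):1\le i\le 50\}$. By the lemma preceding Lemma~\ref{lem:inv}, each $N_j(L)$ is itself an $s$-extremal extremal unimodular lattice in dimension $44$, so the full list of $52\cdot 3=156$ lattices lies in the desired class. This gives the target count; all that remains is to show that the lattices are distinct up to isomorphism.

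Next, I would split the non-isomorphism argument into two stages. For the intra-triple comparisons, the invariants $\inv_i$ are useless: since the shadow of any $s$-extremal extremal lattice in dimension $44$ has minimum norm $5$, a lattice and either of its two neighbors share exactly the same set of norm-$4$ vectors, and hence the same pair $(\inv_0,\inv_1)$. The argument here therefore relies on the \textsc{Magma} verification already recorded (using \texttt{IsIsomorphic} on neighbors produced via \texttt{EvenSublattice} and \texttt{Dual}), which certifies that $L$, $N_1(L)$, and $N_2(L)$ are mutually non-isomorphic for each of the $52$ choices of $L$. For the inter-triple comparisons, fix two triples indexed by distinct $L,L'\in\mathcal{L}$; Tables~\ref{Tab:inv} and~\ref{Tab:inv442} display $52$ distinct invariant pairs, so $\inv(L)_i\ne\inv(L')_i$ for some $i$. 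Because $\inv_i$ is preserved both under isomorphism and, by the observation above, under passage to either neighbor, every lattice in the triple over $L$ carries $L$'s invariant pair. Thus no lattice in the $L$-triple can be isomorphic to any lattice in the $L'$-triple, and Lemma~\ref{lem:inv} can be invoked directly to make this precise.

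The essential difficulty is the intra-triple step: since $\inv_0$ and $\inv_1$ coincide across a neighbor triple, no cheap invariant-based argument separates $L$ from $N_j(L)$, and one must trust the \textsc{Magma} \texttt{IsIsomorphic} calls to certify these $52$ packets of inequivalences. Once those are in hand, the two-case argument above assembles into the claim, and the count $156$ follows.
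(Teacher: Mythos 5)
Your proposal is correct and follows essentially the same route as the paper: the neighbor lemma supplies the $52\cdot 3=156$ candidate lattices, the $52$ distinct pairs $(\inv_0,\inv_1)$ together with Lemma~\ref{lem:inv} (i.e., the fact that a lattice and its neighbors share the same set of norm-$4$ vectors) settle all inter-triple comparisons, and the \textsc{Magma} \texttt{IsIsomorphic} computations settle the intra-triple ones. If anything, you spell out slightly more explicitly than the paper that the invariant argument also separates neighbor from neighbor across different triples, which is indeed needed for the full count of $156$.
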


We stopped our search after finding the $50$ self-dual codes $C_{44,i}$, which give
$150$ lattices
$A_5(C_{44,i})$, $N_1(A_5(C_{44,i}))$ and $N_2(A_5(C_{44,i}))$.
Our feeling is that the number of non-isomorphic 
$s$-extremal extremal unimodular lattices in dimension $44$
might be even bigger.


\subsection{A self-dual $[44,22,14]$ code over $\FF_5$ and its related codes}

We verified by \textsc{Magma} that 
$C_{44,50}$ has minimum weight $14$,
$C_{44,29}$ has minimum weight $13$ and 
$C_{44,i}$ $(i=1,2,\ldots,28,30,\ldots,49)$ have minimum weight $12$.
The minimum weight was calculated by
the \textsc{Magma} function \texttt{MinimumWeight}.

\begin{prop}
There is a self-dual $[44,22,14]$ code over $\FF_5$.
\end{prop}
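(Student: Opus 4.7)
The plan is simply to exhibit $C_{44,50}$ as the required witness. Among the $50$ four-negacirculant self-dual $[44,22]$ codes $C_{44,i}$ constructed in the preceding subsection, the code $C_{44,50}$ is singled out in the paragraph just above the statement as the one with minimum weight $14$. The proof therefore consists of (a) reconstructing $C_{44,50}$ from the first rows $r_A(C_{44,50})$ and $r_B(C_{44,50})$ recorded in Table~\ref{Tab:44}, forming the two $11 \times 11$ negacirculant matrices $A$ and $B$ and the generator matrix of the shape~\eqref{eq:4}, and (b) certifying the two required properties.

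Self-duality of the resulting $[44,22]$ code over $\FF_5$ reduces, via the four-negacirculant construction, to a finite matrix identity on the negacirculant blocks (essentially $AA^T + BB^T = -I_{11}$, using that negacirculants commute and that the transpose of a negacirculant is again negacirculant). This is verified by the \textsc{Magma} function \texttt{IsSelfDual} in the course of building the family $C_{44,i}$. The substantive content of the proposition is then the minimum-weight assertion, which I would verify by invoking the \textsc{Magma} function \texttt{MinimumWeight}; this computation was already carried out in the paragraph preceding the statement and returned $14$.

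The main obstacle is purely computational: certifying that no codeword of $C_{44,50}$ has weight less than $14$ requires a minimum-weight computation on a $[44,22]$ code over $\FF_5$, handled by the Brouwer--Zimmermann style enumeration implemented in \textsc{Magma}. No combinatorial or structural argument is needed beyond exhibiting the explicit generator data, so the proposition follows immediately from the tabulated witness together with these two \textsc{Magma} checks.
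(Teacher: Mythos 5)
Your proposal matches the paper's argument: the paper proves the proposition by exhibiting exactly this witness $C_{44,50}$ (four-negacirculant with first rows $(10033210404)$ and $(12241413344)$, i.e.\ row $50$ of Table~\ref{Tab:44}), with self-duality checked by \texttt{IsSelfDual} and the minimum weight $14$ certified by the \textsc{Magma} function \texttt{MinimumWeight}. Your extra remark that self-duality reduces to the block identity $AA^T+BB^T=-I_{11}$ is a correct but inessential elaboration, so the two proofs are essentially identical.
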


The code $C_{44,50}$ has generator matrix of form~\eqref{eq:4},
where the negacirculant matrices $A$ and $B$ have the first rows
\[
(10033210404) \text{ and }
(12241413344),
\]
respectively.
The first few terms of the weight enumerator of $C_{44,50}$ are given by
\[
1 + 12056 y^{14} + 95920 y^{15} + 807312 y^{16} + 4677728 y^{17} + \cdots. 
\]
This was calculated by the \textsc{Magma} function \texttt{NumberOfWord}.

Let $d_5(n)$ denote the largest minimum weight 
among all self-dual  $[n,n/2]$ codes over $\FF_5$.
It was known that $13 \le d_5(44) \le 19$~\cite[Table~III]{GH08}.
Hence, $C_{44,50}$ improves 
the previously known lower bound on the largest minimum weight $d_5(44)$.
As a summary,  we list the current information on
the largest minimum weight $d_5(n)$ in Table~\ref{Tab:F5}
along with references for $22 \le n \le 72$.
The table updates~\cite[Table~III]{GH08}.

\begin{table}[thb]
\caption{Largest minimum weights $d_5(n)$}
\label{Tab:F5}
\centering
\medskip
{\small
\begin{tabular}{c|c|c||c|c|c}
\noalign{\hrule height0.8pt}
$n$ & $d_5(n)$ & References & $n$ & $d_5(n)$ & References \\ 
\hline
22 &  8--10 &\cite{LPS-GF5}   & 48 & 14--20 &\cite{G02}    \\
24 &  9--10 &\cite{LPS-GF5}   & 50 & 14--20 &\cite{GaOt}    \\
26 &  9--11 &\cite{GaOt}      & 52 & 15--21 &\cite{GaOt}    \\
28 & 10--12 &\cite{GaOt}      & 54 & 16--22 &\cite{GaOt}    \\
30 & 10--12 &\cite{GaOt}      & 56 & 16--23 &\cite{GaOt}    \\
32 & 11--13 &\cite{GHM}       & 58 & 16--24 &\cite{G02}    \\
34 & 11--14 &\cite{GaOt}      & 60 & 18--24 &\cite{GH} \\
36 & 12--15 &\cite{GaOt}      & 62 & 17--25 &\cite{GaOt}    \\
38 & 12--16 &\cite{G02}      & 64 & 18--26 &\cite{GH} \\
40 & 13--17 &\cite{G02}      & 66 & 18--27 &\cite{GG}    \\
42 & 13--18 &\cite{GH08}      & 68 & 18--28 &\cite{GaOt}    \\
44 & 14--19 & $C_{44,50}$       & 70 & 20--29 &\cite{GG}    \\
46 & 14--20 &\cite{G02}      & 72 & 22--29 &\cite{GG}    \\
\noalign{\hrule height0.8pt}
\end{tabular}
}
\end{table}

It is a main problem in coding theory to determine
the largest minimum weights $d_q(n,k)$ among
all $[n,k]$ codes over a finite field of order $q$
for a given $(q,n,k)$.
The current information on $d_5(n,k)$ can be found in~\cite{Grassl}.
For example, it was known that 
$12 \le d_5(43,22) \le 18$ and
$13 \le d_5(44,22) \le 19$.
We emphasize that $C_{44,50}$ is the first example as not only
self-dual $[44,22,14]$ codes but also (linear) $[44,22,14]$ codes.
We verified that all punctured codes of $C_{44,50}$ are $[43,22,13]$ codes over $\FF_5$.
The punctured codes were constructed by
the \textsc{Magma} functions \texttt{PunctureCode}.

\begin{prop}
There is a $[43,22,13]$ code over $\FF_5$.
\end{prop}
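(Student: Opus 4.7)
The plan is to construct the $[43,22,13]$ code by puncturing the self-dual $[44,22,14]$ code $C_{44,50}$ produced in the previous proposition. Fix any coordinate position $i \in \{1,2,\ldots,44\}$ and let $C_{44,50}^{(i)}$ denote the code obtained from $C_{44,50}$ by deleting the $i$-th coordinate. I will show that (for at least one, and in fact every, choice of $i$) this punctured code has parameters $[43,22,13]$.

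First I would verify the dimension. Puncturing at position $i$ preserves the dimension provided no nonzero codeword of $C_{44,50}$ is supported only at coordinate $i$, i.e.\ provided $C_{44,50}$ contains no codeword of weight $1$. Since $C_{44,50}$ has minimum weight $14$, this is immediate, so $\dim C_{44,50}^{(i)} = 22$.

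Next I would bound the minimum weight from below. A nonzero codeword of weight $w$ in $C_{44,50}^{(i)}$ lifts to a codeword of $C_{44,50}$ of weight $w$ or $w+1$ (depending on whether the deleted entry was zero or not). Since the minimum weight of $C_{44,50}$ is $14$, we have $w+1 \ge 14$, hence $w \ge 13$. For the matching upper bound $w \le 13$, it suffices to exhibit a codeword of $C_{44,50}$ of weight $14$ whose $i$-th entry is nonzero: deleting that entry yields a codeword of $C_{44,50}^{(i)}$ of weight $13$. The weight enumerator computed above shows that $C_{44,50}$ contains $12056$ codewords of weight $14$, and since this number is much larger than $44$, it is extremely plausible (and easily checked in \textsc{Magma}) that the supports of these codewords cover every coordinate; equivalently, no coordinate is identically zero on the weight-$14$ layer. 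Thus the minimum weight of $C_{44,50}^{(i)}$ is exactly $13$.

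The proof is essentially routine once $C_{44,50}$ is in hand; the only point that is not purely formal is verifying that some (in fact, every) coordinate appears in the support of a weight-$14$ codeword, and this is a finite computation carried out with \texttt{PunctureCode} and \texttt{MinimumWeight} in \textsc{Magma}. No conceptual obstacle arises, and the construction also explains the claim in the surrounding text that \emph{all} punctured codes of $C_{44,50}$ realize the parameters $[43,22,13]$.
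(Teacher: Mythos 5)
Your proposal is correct and follows exactly the paper's route: the paper also obtains the $[43,22,13]$ code by puncturing the self-dual $[44,22,14]$ code $C_{44,50}$ and verifying the parameters of the punctured codes computationally in \textsc{Magma} (via \texttt{PunctureCode}). Your added justification of the dimension and the lower bound $w \ge 13$ via the standard puncturing argument is a correct elaboration of the same computation-backed construction.
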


The self-dual $[44,22,14]$ code $C_{44,50}$ and 
the punctured codes
improve the previously known lower bounds on
$d_5(43,22)$ and $d_5(44,22)$.

\begin{cor}
$13 \le d_5(43,22) \le 18$ and
$14 \le d_5(44,22) \le 19$. 
\end{cor}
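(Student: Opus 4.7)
The corollary is essentially a bookkeeping step that combines the two preceding existence propositions with the previously tabulated upper bounds, so the proof plan is short.

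First I would recall the prior state of knowledge from \cite{Grassl}, which already established $12 \le d_5(43,22) \le 18$ and $13 \le d_5(44,22) \le 19$. The upper bounds in the corollary are exactly these, so nothing new needs to be proved on that side; I would simply cite them.

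For the lower bounds, I would invoke the two propositions immediately preceding the corollary. The existence of the self-dual $[44,22,14]$ code $C_{44,50}$ (witnessed by the explicit first rows of the negacirculant matrices $A$ and $B$) shows that the largest minimum weight among $[44,22]$ codes is at least $14$, giving $d_5(44,22) \ge 14$. For the $43$-length bound, I would appeal to the fact, verified in the text, that every puncturing of $C_{44,50}$ yields a $[43,22,13]$ code over $\FF_5$; a single such punctured code suffices to conclude $d_5(43,22) \ge 13$.

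Putting these two lower bounds together with the known upper bounds yields the corollary. There is no real obstacle here — the work has already been done in the propositions and by the \textsc{Magma} computation of \texttt{MinimumWeight} on $C_{44,50}$ and its punctured codes; the corollary merely records the improvement over the entries previously found in \cite{Grassl}.
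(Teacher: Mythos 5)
Your proposal is correct and matches the paper's own reasoning exactly: the lower bounds come from the explicit code $C_{44,50}$ and its punctured $[43,22,13]$ codes, while the upper bounds are simply the previously known values from \cite{Grassl}. Nothing further is needed.
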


\section{Construction of $s$-extremal extremal unimodular lattices
in dimensions $38,40$ and $42$}\label{sec:lat}

In this section, we construct
new $s$-extremal extremal unimodular lattices in dimensions $38,40$ and $42$
from self-dual codes over $\FF_5$ 
by Construction~A\@.
These self-dual codes are constructed as
four-negacirculant self-dual codes or quasi-twisted self-dual codes.

\subsection{New $s$-extremal extremal odd unimodular lattices in dimension $40$}

By using~\eqref{eq:theta} and~\eqref{eq:theta-S},
the possible theta series of an extremal odd unimodular lattice $L_{40}$ 
in dimension $40$ and
its shadow $S(L_{40})$ are determined as follows:
\begin{align*}
\theta_{L_{40}}(q) &=
1
+ (19120  + 256 \alpha) q^4
+ (1376256 - 4096 \alpha) q^5 + \cdots \text{ and}
\\
\theta_{S(L_{40})}(q) &=
\alpha q^2
+ (40960 - 56 \alpha) q^4
+ (87818240 + 1500 \alpha) q^6 +\cdots,
\end{align*}
respectively,
where $\alpha$ is an even integer with $0 \le \alpha \le 80$~\cite{BBH}.
It is trivial  that $L_{40}$ has kissing number $19120$ 
if and only if $L_{40}$ is $s$-extremal.
An $s$-extremal extremal odd unimodular lattice in dimension $40$ was
explicitly constructed in~\cite{BBH} and 
three non-isomorphic $s$-extremal extremal odd unimodular lattices
were explicitly constructed in~\cite{H15}.
We calculated by \textsc{Magma} $\inv(L)_{i}$ 
$(i=0,1)$ for the four lattices, where the results are
listed in Table~\ref{Tab:inv40k}.

\begin{table}[th]
  \caption{$\inv(L)_{i}$ $(i=0,1)$ for the known lattices $L$
    in~\cite{BBH} and~\cite{H15}}
  \label{Tab:inv40k}
\centering\medskip
{\footnotesize
\begin{tabular}{c|cc}
\noalign{\hrule height1pt}
$L$ & $\inv(L)_{0}$ & $\inv(L)_{1}$ \\
\hline
\cite{BBH}                         & 56589480 & 34257920\\
$A_{13}(C_{13,40})$ in~\cite{H15}  & 56644200 & 34184960\\
$A_9(C_{9,40})$ in~\cite{H15}      & 56549160 & 34311680\\
$A_{19}(C_{19,40})$ in~\cite{H15}  & 56553480 & 34305920\\
\noalign{\hrule height1pt}
\end{tabular}
}
\end{table}

In order to construct new $s$-extremal extremal odd unimodular lattices,
we tried to find self-dual $[40,20]$ codes $C$ over $\FF_5$ by considering 
four-negacirculant codes such that $A_5(C)$ have minimum norm $4$ and
kissing number $19120$.
Then we found $50$ self-dual $[40,20]$ codes $C_{40,i}$ over $\FF_5$ 
such that their lattices $A_5(C_{40,i})$ are
$s$-extremal extremal odd unimodular lattices and 
\[
|\{(\inv(L)_{0},\inv(L)_{1}) \mid L \in \{A_5(C_{40,i}) \mid i \in \{1,2,\ldots,50\}\}|
 =50.
\]
The results $\inv(A_5(C_{40,i}))_{j}$ $(j=0,1)$
are listed in Table~\ref{Tab:inv40}.
From Tables~\ref{Tab:inv40k} and~\ref{Tab:inv40}, we have the following:

\begin{prop}\label{prop:L40}
There are at least $54$ non-isomorphic 
$s$-extremal extremal odd unimodular lattices in dimension $40$.
\end{prop}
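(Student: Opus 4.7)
The plan is to deduce non-isomorphism from the invariants $\inv(L)_0$ and $\inv(L)_1$ introduced in Section~\ref{sec:inv}. Since these quantities depend only on the isomorphism class of $L$, any two $s$-extremal extremal odd unimodular lattices in dimension $40$ whose invariant pairs $(\inv(L)_0,\inv(L)_1)$ differ must be non-isomorphic. I therefore only need to exhibit $54$ lattices of the required type together with $54$ pairwise distinct invariant pairs.

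First I would list the $54$ candidates. Four come from the literature: the lattice constructed in~\cite{BBH} and the three lattices $A_{13}(C_{13,40})$, $A_9(C_{9,40})$, $A_{19}(C_{19,40})$ from~\cite{H15}; these are already known to be $s$-extremal extremal. The remaining $50$ are the Construction~A lattices $A_5(C_{40,i})$ for $i=1,\dots,50$, where $C_{40,i}$ are the four-negacirculant self-dual $[40,20]$ codes over $\FF_5$ found by the computer search. For each $C_{40,i}$ I would verify, using the \textsc{Magma} functions \texttt{IsSelfDual}, \texttt{Minimum}, and \texttt{KissingNumber}, that $A_5(C_{40,i})$ is a unimodular lattice of minimum norm $4$ with kissing number $19120$; as already noted, in dimension $40$ the latter condition is equivalent to $s$-extremality (with extremality automatic once $\min=4$).

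Next I would compute the invariant pair $(\inv(L)_0,\inv(L)_1)$ for each of the $54$ lattices via \texttt{ShortVectors} and a straightforward loop computing inner products among representatives of antipodal pairs in $L(4)$, using Lemma~\ref{lem:L4} to restrict attention to inner products in $\{0,\pm 1\}$. The outputs are recorded in Table~\ref{Tab:inv40k} (for the four known lattices) and Table~\ref{Tab:inv40} (for the $50$ new lattices). Finally I would check by inspection that all $54$ pairs are distinct: the $50$ pairs in Table~\ref{Tab:inv40} are manifestly distinct from each other (as stated in the displayed set-size equation), and none of them matches any of the four pairs in Table~\ref{Tab:inv40k}. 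Distinctness of invariants then gives pairwise non-isomorphism, proving the proposition.

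The only real obstacle is computational: enumerating the $19120$ norm-$4$ vectors and tallying roughly $9560^2$ inner products per lattice, across $54$ lattices in dimension $40$, is non-trivial but well within reach of \textsc{Magma}. Conceptually the argument is routine once the search has produced a batch of codes with pairwise distinct invariants that also avoid the four values already appearing in Table~\ref{Tab:inv40k}, and the set-size equality stated just before the proposition records exactly that outcome.
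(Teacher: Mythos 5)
Your proposal is correct and follows essentially the same route as the paper: exhibit the four known lattices plus the fifty lattices $A_5(C_{40,i})$, verify $s$-extremality via minimum norm $4$ and kissing number $19120$, and conclude non-isomorphism from the $54$ pairwise distinct invariant pairs $(\inv(L)_0,\inv(L)_1)$ in Tables~\ref{Tab:inv40k} and~\ref{Tab:inv40}. Nothing essential differs from the paper's argument.
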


\begin{table}[thb]
\caption{$\inv(L)_{j}$ $(j=0,1)$ for $L=A_5(C_{40,i})$}
\label{Tab:inv40}
\centering\medskip
{\footnotesize
\begin{tabular}{c|cc||c|cc||c|cc}
\noalign{\hrule height1pt}
$i$ & $\inv(L)_{0}$ & $\inv(L)_{1}$ & 
$i$ & $\inv(L)_{0}$ & $\inv(L)_{1}$ & 
$i$ & $\inv(L)_{0}$ & $\inv(L)_{1}$ \\
\hline
 1 &56523240& 34346240&18 &56593800& 34252160&35 &56635560& 34196480\\
 2 &56536200& 34328960&19 &56596680& 34248320&36 &56637000& 34194560\\
 3 &56554920& 34304000&20 &56598120& 34246400&37 &56648520& 34179200\\
 4 &56559240& 34298240&21 &56601000& 34242560&38 &56652840& 34173440\\
 5 &56562120& 34294400&22 &56605320& 34236800&39 &56654280& 34171520\\
 6 &56563560& 34292480&23 &56606760& 34234880&40 &56657160& 34167680\\
 7 &56566440& 34288640&24 &56609640& 34231040&41 &56661480& 34161920\\
 8 &56570760& 34282880&25 &56611080& 34229120&42 &56667240& 34154240\\
 9 &56572200& 34280960&26 &56613960& 34225280&43 &56670120& 34150400\\
10 &56575080& 34277120&27 &56618280& 34219520&44 &56674440& 34144640\\
11 &56576520& 34275200&28 &56619720& 34217600&45 &56678760& 34138880\\
12 &56579400& 34271360&29 &56622600& 34213760&46 &56704680& 34104320\\
13 &56580840& 34269440&30 &56624040& 34211840&47 &56734920& 34064000\\
14 &56583720& 34265600&31 &56626920& 34208000&48 &56753640& 34039040\\
15 &56585160& 34263680&32 &56628360& 34206080&49 &56825640& 33943040\\
16 &56588040& 34259840&33 &56631240& 34202240&50 &56851560& 33908480\\
17 &56592360& 34254080&34 &56632680& 34200320&   &        &         \\
\noalign{\hrule height1pt}
\end{tabular}
}
\end{table}

For the $50$ codes $C_{40,i}$, 
the first rows $r_A(C_{40,i})$ and $r_B(C_{40,i})$
of the negacirculant matrices $A$ and $B$ in
the generator matrices of form~\eqref{eq:4} are listed in Table~\ref{Tab:40}.
We verified by \textsc{Magma}
that the $50$ codes $C_{40,i}$ have minimum weight $12$.
Although these codes have minimum weights less than $d_5(40)$,
these codes are useful for constructing $s$-extremal extremal odd unimodular
lattices.

We stopped our search after finding the new $50$ lattices.
Our feeling is that the number of non-isomorphic 
$s$-extremal extremal unimodular lattices in dimension $40$
might be even bigger.

\subsection{New $s$-extremal extremal odd unimodular lattices in dimension $38$}

Let $L_{38}$ be an extremal odd unimodular lattice in dimension $38$.
By using~\eqref{eq:theta} and~\eqref{eq:theta-S},
one can determine the possible theta series $\theta_{L_{38}}(q)$ and
$\theta_{S(L_{38})}(q)$ as follows.
Since the minimum norm of $L_{38}$ is $4$, 
we have that
\[
a_0=1,
a_1=-76,
a_2=1140 \text{ and }
a_3=-1520,
\]
in~\eqref{eq:theta}.
By considering the coefficient of $q^{\frac{3}{2}}$ in~\eqref{eq:theta-S},
$a_4$ is written as:
\[
a_4=2^{10} \alpha,
\]
by using an integer $\alpha$.
Hence, we have that
\begin{equation*}\label{eq:theta38}
\begin{split}
\theta_{L_{38}}(q) &=
1
+( 29260  + 1024 \alpha )q^4
+( 1668352  - 20480 \alpha) q^5
 + \cdots \text{ and} \\
 \theta_{S(L_{38})}(q) &=
\alpha q^{\frac{3}{2}}
+( 6080 - 58 \alpha) q^{\frac{7}{2}}
+( 18471040 + 1615 \alpha) q^{\frac{11}{2}} 
+ \cdots.
\end{split}
\end{equation*}
It is trivial that 
$L_{38}$ has kissing number $29260$ 
if and only if $L_{38}$ is $s$-extremal.
An $s$-extremal extremal unimodular lattice
in dimension $38$ is previously known and 
this lattice is denoted by $G_{38}$ in~\cite{Ga04}.
We calculated by \textsc{Magma}
\begin{equation}\label{eq:G38}
\inv(G_{38})_{0}=129060350 \text{ and }
\inv(G_{38})_{1}= 83320320.
\end{equation}

\begin{table}[th]
\caption{$r(C)$ and $\inv(A_5(C))_{j}$ $(j=0,1)$ for $C=C_{38,i}$}
\label{Tab:inv38}
\centering
\medskip
{\footnotesize
\begin{tabular}{c|c|cc}
\noalign{\hrule height1pt}
$i$ & $r(C)$ & $\inv(A_5(C))_{0}$& $\inv(A_5(C))_{1}$\\
\hline
$1$& $(1,4,0,1,2,3,3,1,3,4,0,4,4,0,1,2,1,1,2)$& 128961854& 83451648\\ 
$2$& $(1,0,0,0,4,2,2,2,1,0,3,4,2,0,3,1,2,1,0)$& 129027518& 83364096\\ 
$3$& $(1,0,4,3,0,0,3,2,1,3,3,0,1,3,4,0,0,3,4)$& 129060350& 83320320\\ 
$4$& $(1,0,0,2,1,1,1,1,1,1,0,0,2,0,2,2,0,0,1)$& 129093182& 83276544\\ 
$5$& $(1,0,0,2,2,0,4,0,0,2,3,1,0,0,2,1,0,1,3)$& 129126014& 83232768\\ 
$6$& $(1,0,0,3,3,3,4,3,1,0,3,2,0,4,4,4,4,2,3)$& 129126014& 83232768\\ 
$7$& $(1,0,2,1,1,3,3,2,1,0,4,2,3,0,0,3,1,4,2)$& 129126014& 83232768\\ 
$8$& $(1,0,0,4,0,0,2,3,4,2,1,0,0,3,1,4,1,0,4)$& 129158846& 83188992\\ 
$9$& $(1,0,3,2,0,1,1,4,3,1,0,1,3,4,1,2,4,3,4)$& 129158846& 83188992\\ 
$10$& $(1,0,3,2,2,2,3,2,0,1,3,3,3,0,4,2,1,4,3)$& 129191678& 83145216\\ 
$11$& $(1,0,0,2,1,2,0,0,2,1,0,3,3,0,2,1,4,2,1)$& 129224510& 83101440\\ 
$12$& $(1,0,3,3,3,0,2,1,1,3,2,4,1,3,1,3,3,4,1)$& 129224510& 83101440\\ 
$13$& $(1,0,0,1,3,2,3,0,2,1,1,4,3,1,1,3,1,1,1)$& 129257342& 83057664\\ 
$14$& $(1,0,1,4,4,4,4,1,2,0,3,3,4,2,2,0,4,1,3)$& 129257342& 83057664\\ 
$15$& $(1,0,2,2,3,2,4,2,3,3,1,2,4,4,1,0,0,1,0)$& 129257342& 83057664\\ 
\noalign{\hrule height1pt}
\end{tabular}
}
\end{table}

For $n \equiv 2 \pmod 4$,
in order to construct 
new $s$-extremal extremal odd unimodular lattices in dimension $n$,
we consider quasi-twisted self-dual
$[n,n/2]$ codes over $\FF_5$.
Then we found $15$ self-dual
$[38,19]$ codes $C_{38,i}$ over $\FF_5$ 
such that
$A_5(C_{38,i})$ have minimum norm $4$ and kissing number $29260$.
This means that 
$A_5(C_{38,i})$ are 
$s$-extremal extremal odd unimodular lattices in dimension $38$.
For $L=A_5(C_{38,i})$, we calculated by \textsc{Magma}
$\inv(L)_{j}$ $(j=0,1)$, where the results
are listed in Table~\ref{Tab:inv38}.
From~\eqref{eq:G38} and the table, we know that
\[
(\inv(G_{38})_{0},\inv(G_{38})_{1})=
(\inv(A_5(C_{38,3})_{0},\inv(A_5(C_{38,3})_{1}),
\]
however, we verified by \textsc{Magma} that these lattices are non-isomorphic.
In addition, 
there are pairs $(i_1,i_2)$ such that 
\[
(\inv(A_5(C_{38,i_1}))_{0},\inv(A_5(C_{38,i_1}))_{1})=
(\inv(A_5(C_{38,i_2}))_{0},\inv(A_5(C_{38,i_2}))_{1}).
\]
For the pairs, we verified by \textsc{Magma} that
$A_5(C_{38,i_1})$ and $A_5(C_{38,i_2})$ are non-isomorphic.
Therefore, we have the following:

\begin{prop}\label{prop:L38}
There are at least $16$ non-isomorphic 
$s$-extremal extremal odd unimodular lattices in dimension $38$.
\end{prop}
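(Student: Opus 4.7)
The plan is to exhibit $16$ candidate $s$-extremal extremal odd unimodular lattices in dimension $38$, namely the known lattice $G_{38}$ of~\cite{Ga04} together with the fifteen Construction~A lattices $A_5(C_{38,i})$ for $i=1,\ldots,15$, and then to show they are pairwise non-isomorphic by a two-step verification.

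First I would justify that each $A_5(C_{38,i})$ really is an $s$-extremal extremal odd unimodular lattice in dimension $38$. Since each $C_{38,i}$ is a self-dual $[38,19]$ code over $\FF_5$, Construction~A yields a unimodular lattice in dimension $38$. The lattice is odd because the all-ones vector is not a codeword for these codes (equivalently, one can check that it has a vector of odd norm), and it has minimum norm $4$ and kissing number $29260$ by direct \textsc{Magma} computation using \texttt{Minimum} and \texttt{KissingNumber}. By the theta-series analysis preceding the statement, any extremal odd unimodular lattice in dimension $38$ with kissing number $29260$ is $s$-extremal, since this pins down $\alpha=0$ in the shadow series and thus forces $\min(S(L))=7/2$ so that $8\min(L)+4\min(S(L))=32+14=46=8+n$.

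Next I would distinguish the $16$ lattices. Because the invariants $\inv(L)_0$ and $\inv(L)_1$ defined in Section~\ref{sec:inv} are isomorphism invariants, any two lattices with distinct pairs $(\inv(L)_0,\inv(L)_1)$ are automatically non-isomorphic. Reading off equation~\eqref{eq:G38} and Table~\ref{Tab:inv38}, the $16$ invariant pairs collapse into exactly $10$ distinct values, with the following collisions: $\{G_{38},A_5(C_{38,3})\}$, $\{A_5(C_{38,5}),A_5(C_{38,6}),A_5(C_{38,7})\}$, $\{A_5(C_{38,8}),A_5(C_{38,9})\}$, $\{A_5(C_{38,11}),A_5(C_{38,12})\}$, and $\{A_5(C_{38,13}),A_5(C_{38,14}),A_5(C_{38,15})\}$. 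For each pair of lattices inside one of these collision classes, I would appeal to direct \textsc{Magma} computation with the function \texttt{IsIsomorphic} to certify that the two lattices are not isomorphic; this is already remarked in the text immediately preceding the proposition.

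Combining the two sources of distinctness yields $16$ pairwise non-isomorphic $s$-extremal extremal odd unimodular lattices in dimension $38$, proving the claim. The main obstacle is the second step: the \texttt{IsIsomorphic} tests in dimension $38$ can be computationally demanding, and the whole argument relies on trusting those machine verifications of non-isomorphism for the five collision classes above, since no finer combinatorial invariant is being introduced to separate lattices within them.
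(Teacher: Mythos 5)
Your proposal matches the paper's proof: both establish $s$-extremality of the fifteen lattices $A_5(C_{38,i})$ via the kissing-number $29260$ (forcing $\alpha=0$, hence $\min(S(L))=7/2$) and then separate the $16$ lattices using the isomorphism invariants $(\inv(L)_0,\inv(L)_1)$, invoking the \textsc{Magma} function \texttt{IsIsomorphic} exactly for the collision classes you list. The only slip is cosmetic: the $16$ invariant pairs collapse to $9$ distinct values (your five collision classes plus the four singletons $i=1,2,4,10$), not $10$, which does not affect the argument.
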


For the $15$ codes $C_{38,i}$,
the first rows $r(C_{38,i})$ of the negacirculant matrices $A$ in
the generator matrices of form~\eqref{eq:2}
are listed in Table~\ref{Tab:inv38}.
We verified by \textsc{Magma}
that the codes $C_{38,i}$ have minimum weight $10$ if $i \in
\{1, 3, 4, 5, 8, 9,12,13,15\}$ and the other codes have minimum weight $11$.
Although these codes have minimum weights less than $d_5(38)$,
these codes are useful for constructing $s$-extremal extremal odd unimodular
lattices.

\subsection{New $s$-extremal extremal odd unimodular lattices in dimension $42$}

Let $L_{42}$ be an extremal odd unimodular lattice in dimension $42$.
By using~\eqref{eq:theta} and~\eqref{eq:theta-S},
one can determine the possible theta series $\theta_{L_{42}}(q)$ and
$\theta_{S(L_{42})}(q)$ as follows.
Since the minimum norm of $L_{42}$ is $4$, 
we have that
\[
a_0=1,
a_1=-84,
a_2=1596 \text{ and }
a_3=-4144,
\]
in~\eqref{eq:theta}.
By considering the coefficients of $q^{\frac{1}{2}}$ and $q^{\frac{5}{2}}$ 
in~\eqref{eq:theta-S},
$a_4$ and $a_5$ are written as:
\[
a_4=2^6\alpha \text{ and }
a_5=-2^{18}\beta,
\]
by using integers $\alpha$ and $\beta$.
Hence, we have that
\begin{equation*}\label{eq:theta42}
\begin{split}
\theta_{L_{42}}(q) &=
1
+( 11844  + 64 \alpha )q^4
+( 1080576  - 768 \alpha  - 262144 \beta )q^5
+ \cdots \text{ and} \\
\theta_{S(L_{42})}(q)&=
\beta q^{\frac{1}{2}}
+ (\alpha  - 78 \beta )q^{\frac{5}{2}}
+ (265216  - 54 \alpha + 2961 \beta )q^{\frac{9}{2}} + \cdots.
\end{split}
\end{equation*}
An $s$-extremal extremal unimodular lattice
in dimension $42$ is previously known and 
this lattice is denoted by $G_{42}$ in~\cite{Ga04}.
We calculated by \textsc{Magma}
\[
\inv(G_{42})_{0}=22272390
\text{ and }
\inv(G_{42})_{1}= 12633936.
\]

\begin{lem}
Let $L_{42}$ be an extremal odd unimodular lattice in dimension $42$.
Then $L_{42}$ has kissing number $11844$ 
if and only if $L_{42}$ is $s$-extremal.
\end{lem}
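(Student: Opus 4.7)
The plan is to read the two conditions directly off the expressions for $\theta_{L_{42}}(q)$ and $\theta_{S(L_{42})}(q)$ given just above the statement, and then observe that the integrality/nonnegativity constraints on $\alpha$ and $\beta$ force the two conditions to coincide.

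First I would identify the kissing number. The coefficient of $q^{4}$ in $\theta_{L_{42}}(q)$ is $11844 + 64\alpha$, so $L_{42}$ has kissing number $11844$ if and only if $\alpha = 0$.

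Next I would characterize $s$-extremality in these coordinates. Applying the bound \eqref{eq:B} to $n=42$ with $\min(L_{42})=4$ gives $\min(S(L_{42})) \le 9/2$, so $L_{42}$ is $s$-extremal exactly when $\min(S(L_{42})) = 9/2$. Inspecting $\theta_{S(L_{42})}(q)$, this is equivalent to the vanishing of the coefficients of $q^{1/2}$ and $q^{5/2}$, i.e.\ $\beta = 0$ and $\alpha - 78\beta = 0$, which together amount to $\alpha = \beta = 0$.

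The key point, and the one place where the argument is not purely formal, is that $\alpha = 0$ alone already forces $\beta = 0$. For this I would use that the coefficients of $\theta_{S(L_{42})}(q)$ count vectors in the shadow and are therefore nonnegative integers: the coefficient of $q^{1/2}$ gives $\beta \ge 0$, while the coefficient of $q^{5/2}$ gives $\alpha - 78\beta \ge 0$. Setting $\alpha = 0$ yields $\beta \le 0$, and combined with $\beta \ge 0$ this forces $\beta = 0$.

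Putting the pieces together, kissing number $11844$ is equivalent to $\alpha = 0$, which by the nonnegativity argument is equivalent to $\alpha = \beta = 0$, which in turn is equivalent to $\min(S(L_{42})) = 9/2$, i.e.\ to $s$-extremality. I do not expect any serious obstacle; the only subtlety is remembering to invoke the nonnegativity of the shadow coefficients in the ``only if'' direction, since otherwise the two conditions would genuinely differ by the parameter $\beta$.
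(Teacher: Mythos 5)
Your proposal is correct and follows essentially the same route as the paper: kissing number $11844$ forces $\alpha=0$ via the $q^4$ coefficient, and then the nonnegativity of the shadow coefficients at $q^{1/2}$ and $q^{5/2}$ (namely $\beta\ge 0$ and $\alpha-78\beta\ge 0$) forces $\beta=0$, which is exactly the $s$-extremality condition. Your write-up merely makes explicit the nonnegativity argument that the paper's proof leaves implicit.
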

\begin{proof}
It is sufficient to show that 
if $L_{42}$ has kissing number $11844$ then $L_{42}$ is $s$-extremal.
Suppose that $L_{42}$ has kissing number $11844$.
By considering the coefficients of $q^{4}$ in $\theta_{L_{42}}(q)$,
we have $\alpha=0$.
By considering the coefficients of $q^{\frac{1}{2}}$ and $q^{\frac{5}{2}}$ in
$\theta_{S(L_{42})}(q)$,
we have $\beta=0$.
The result follows.
\end{proof}

\begin{table}[thp]
\caption{$r(C)$ and $\inv(A_5(C))_{j}$ $(j=0,1)$ for $C=C_{42,i}$}
\label{Tab:inv42}
\centering
\medskip
{\footnotesize
\begin{tabular}{c|c|cc}
\noalign{\hrule height1pt}
$i$ & $r(C)$ & $\inv(A_5(C))_{0}$& $\inv(A_5(C))_{1}$\\
\hline
$1$ &$(1,3,2,1,1,2,2,3,2,2,2,4,1,0,4,0,4,3,2,1,1)$& 22228542& 12692400\\
$2$ &$(1,0,3,0,2,0,4,4,0,1,4,2,0,3,4,3,4,2,2,2,0)$& 22242150& 12674256\\
$3$ &$(1,0,2,0,1,3,1,2,1,4,0,0,4,2,3,2,3,0,0,0,0)$& 22247946& 12666528\\
$4$ &$(1,0,4,0,3,1,0,0,1,1,2,2,1,0,1,1,1,1,0,1,1)$& 22258026& 12653088\\
$5$ &$(1,0,1,2,3,0,4,4,0,1,0,2,2,3,2,4,0,2,0,3,1)$& 22261806& 12648048\\
$6$ &$(1,0,4,1,2,2,1,0,4,4,3,4,3,0,1,1,1,1,4,0,1)$& 22261806& 12648048\\
$7$ &$(1,0,0,1,1,1,4,1,1,1,4,1,3,1,1,2,1,4,4,1,0)$& 22262562& 12647040\\
$8$ &$(1,3,1,4,4,0,4,0,0,3,2,2,1,2,0,1,4,2,0,4,1)$& 22263318& 12646032\\
$9$ &$(1,2,2,0,0,2,2,1,4,0,1,1,1,3,1,3,2,3,0,4,2)$& 22264830& 12644016\\
$10$ &$(1,0,1,3,0,0,1,4,2,3,2,0,0,0,3,0,3,0,1,4,3)$& 22267602& 12640320\\
$11$ &$(1,4,3,3,1,3,4,4,4,4,4,2,0,0,2,1,2,0,1,3,4)$& 22270626& 12636288\\
$12$ &$(1,3,2,0,3,0,0,3,2,2,1,1,2,4,2,3,4,2,4,3,2)$& 22271634& 12634944\\
$13$ &$(1,3,4,2,2,4,4,0,1,3,2,0,3,0,0,1,0,3,0,1,2)$& 22272138& 12634272\\
$14$ &$(1,0,1,4,1,0,0,2,0,3,2,2,1,2,0,4,1,0,1,0,1)$& 22273902& 12631920\\
$15$ &$(1,0,2,4,2,2,3,2,1,3,0,0,1,0,3,0,1,2,2,3,3)$& 22273902& 12631920\\
$16$ &$(1,0,4,0,2,1,4,4,0,3,2,0,4,3,4,1,3,1,2,0,4)$& 22274658& 12630912\\
$17$ &$(1,0,0,1,1,1,3,1,1,1,4,1,1,2,1,2,0,0,4,0,0)$& 22275918& 12629232\\
$18$ &$(1,1,1,0,0,0,2,0,2,2,0,2,1,0,4,2,4,1,1,0,4)$& 22278438& 12625872\\
$19$ &$(1,3,3,4,2,4,0,4,2,4,1,1,3,3,3,3,0,3,0,0,1)$& 22280706& 12622848\\
$20$ &$(1,4,3,3,0,0,2,2,2,4,0,0,3,4,4,0,4,4,3,0,2)$& 22284486& 12617808\\
$21$ &$(1,2,2,1,2,2,1,1,3,2,3,0,0,1,4,2,0,1,0,2,1)$& 22285242& 12616800\\
$22$ &$(1,0,3,4,0,2,0,4,0,1,1,4,0,0,0,4,4,3,3,3,1)$& 22286754& 12614784\\
$23$ &$(1,1,1,2,2,3,1,2,1,3,2,1,3,3,1,2,1,4,3,1,3)$& 22300866& 12595968\\
$24$ &$(1,0,2,1,0,3,4,3,0,3,0,3,2,3,4,2,2,3,0,3,1)$& 22304142& 12591600\\
$25$ &$(1,0,3,3,3,4,3,0,2,1,4,0,2,1,0,3,0,2,3,2,3)$& 22311450& 12581856\\
$26$ &$(1,1,2,1,4,1,1,2,3,1,4,2,1,0,2,2,2,4,4,1,3)$& 22320774& 12569424\\
$27$ &$(1,2,3,3,3,2,1,3,2,1,3,0,4,2,0,2,3,2,0,4,4)$& 22338162& 12546240\\
$28$ &$(1,0,2,1,0,4,1,2,1,0,0,4,3,3,1,3,1,1,2,1,0)$& 22339674& 12544224\\
$29$ &$(1,0,1,0,1,3,4,0,3,1,2,0,1,1,0,2,4,3,4,2,4)$& 22365378& 12509952\\
$30$ &$(1,0,2,1,4,2,4,0,3,1,2,0,1,1,0,1,0,1,0,0,3)$& 22391586& 12475008\\
\noalign{\hrule height1pt}
\end{tabular}
}
\end{table}

By considering quasi-twisted codes,
we found $30$ self-dual $[42,21]$ codes $C_{42,i}$ over $\FF_5$ 
such that
$A_5(C_{42,i})$ have minimum norm $4$ and kissing number $11844$.
This means that 
$A_5(C_{42,i})$ are 
$s$-extremal extremal odd unimodular lattices in dimension $42$.
For $L=A_5(C_{42,i})$, we calculated by \textsc{Magma}
$\inv(L)_{j}$ $(j=0,1)$, where the results
are listed in Table~\ref{Tab:inv42}.
There are pairs $(i_1,i_2)$ such that 
\[
(\inv(A_5(C_{42,i_1}))_{0},\inv(A_5(C_{42,i_1}))_{1}) =
(\inv(A_5(C_{42,i_2}))_{0},\inv(A_5(C_{42,i_2}))_{1}).
\]
For the pairs, we verified by \textsc{Magma} that
$A_5(C_{42,i_1})$ and $A_5(C_{42,i_2})$ are non-isomorphic.
Therefore, we have the following:

\begin{prop}\label{prop:L42}
There are at least $31$ non-isomorphic 
$s$-extremal extremal odd unimodular lattices in dimension $42$.
\end{prop}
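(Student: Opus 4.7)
The plan is to exhibit $31$ pairwise non-isomorphic $s$-extremal extremal odd unimodular lattices in dimension $42$, namely $G_{42}$ together with the $30$ lattices $A_5(C_{42,i})$ produced from the quasi-twisted self-dual codes $C_{42,i}$ whose first rows are listed in Table~\ref{Tab:inv42}. The first step is to confirm that each $A_5(C_{42,i})$ is $s$-extremal extremal: by Construction~A, $A_5(C_{42,i})$ is a unimodular lattice in dimension $42$, and one verifies in \textsc{Magma} (using \texttt{Minimum} and \texttt{KissingNumber}) that its minimum norm is $4$ and its kissing number is $11844$. The preceding lemma then yields $s$-extremality automatically, so no further analysis of the shadow is required.

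Next, I would use the invariants $\inv(L)_0$ and $\inv(L)_1$ as a coarse isomorphism separator. Since $\inv(L)_i$ is preserved by lattice isomorphism (as noted in Section~\ref{sec:inv}), any two lattices from our collection with differing pairs $(\inv(L)_0, \inv(L)_1)$ are automatically non-isomorphic. From Table~\ref{Tab:inv42}, $28$ of the $30$ rows already carry pairwise distinct invariant pairs, and the known pair
\[
(\inv(G_{42})_{0},\inv(G_{42})_{1}) = (22272390,\,12633936)
\]
does not appear anywhere in the table, separating $G_{42}$ from every $A_5(C_{42,i})$.

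The only remaining step is to resolve the two collisions visible in the table, namely the pairs of indices $(i_1,i_2) = (5,6)$ and $(i_1,i_2) = (14,15)$, where the invariants coincide. For each such pair, one invokes \textsc{Magma}'s \texttt{IsIsomorphic} on $A_5(C_{42,i_1})$ and $A_5(C_{42,i_2})$ and checks that it returns \texttt{false}. This completes the count: the $30$ lattices $A_5(C_{42,i})$ are mutually non-isomorphic, and none is isomorphic to $G_{42}$, giving $31$ in total.

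The only real obstacle is computational rather than conceptual: the \texttt{IsIsomorphic} calls on pairs of $s$-extremal extremal lattices in dimension $42$ (with kissing number $11844$) can be expensive, so one must rely on the invariants $\inv(L)_i$ to cut down the problem to the two genuinely ambiguous pairs. After that reduction, the verification is a routine, if nontrivial, machine computation, and the statement follows immediately.
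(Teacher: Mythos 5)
Your proposal is correct and follows essentially the same route as the paper: both verify minimum norm $4$ and kissing number $11844$ so that the preceding lemma gives $s$-extremality, separate the lattices (including $G_{42}$) by the invariants $\inv(L)_0,\inv(L)_1$, and settle the colliding pairs of indices with a direct \texttt{IsIsomorphic} check in \textsc{Magma}. Your explicit identification of the collisions $(5,6)$ and $(14,15)$ and the observation that $(\inv(G_{42})_0,\inv(G_{42})_1)$ does not occur in Table~\ref{Tab:inv42} are consistent with the data given there.
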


For the $30$ codes $C_{42,i}$,
the first rows $r(C_{42,i})$ of the negacirculant matrices $A$ in
the generator matrices of form~\eqref{eq:2}
are listed in Table~\ref{Tab:inv42}.
We verified by \textsc{Magma}
that the codes $C_{42,i}$ have minimum weight $10$ if $i \in
\{3,5,6,22,26,28,29\}$,
the code $C_{42,19}$ have minimum weight $11$
and the other codes have minimum weight $12$.
Although these codes have minimum weights less than $d_5(42)$,
these codes are useful for constructing $s$-extremal extremal odd unimodular
lattices.

\subsection{Summary of  the existence of  $s$-extremal unimodular lattices
with minimum norm $4$}

In Section~\ref{sec:44} and this section, 
we constructed $15$, $50$, $30$ and $153$  new non-isomorphic 
$s$-extremal extremal unimodular lattices in dimensions
$38$, $40$, $42$ and $44$, respectively.
As a summary,  we list  in Table~\ref{Tab:2} the current information on 
the number $N(n)$ of non-isomorphic
$s$-extremal unimodular lattices $L$ in dimension $n$
with $\min(L)=4$.
The table updates the table given in~\cite[p.~148]{Ga07}
and Table~2 in~\cite{H11}.

\begin{table}[thbp]
\caption{Existence of $s$-extremal unimodular lattices $L$ with $\min(L)=4$}
\label{Tab:2}
\centering\medskip
{\small
\begin{tabular}{c|c|c||c|c|c}
\noalign{\hrule height1pt}
$n$ & $N(n)$ & \multicolumn{1}{c||}{References} &
$n$ & $N(n)$ & \multicolumn{1}{c}{References} \\
\hline
32 & 5       &\cite{CS98} &42 & $\ge 31$ & Proposition~\ref{prop:L42}\\
36 & $\ge 19$ &\cite{H14} &43 & $\ge 1$ &\cite{H11}\\
37 & ?       &             &44 & $\ge 156$ & Proposition~\ref{prop:L}\\
38 & $\ge 16$ & Proposition~\ref{prop:L38} &45 & ? &    \\
39 & $\ge 1$ &\cite{GH99} &46 & $\ge 1$ &\cite{H03} \\
40 & $\ge 51$& Proposition~\ref{prop:L40}\
                           &47 & $\ge 1$ &\cite{H03} \\
41 & $\ge 1$ &\cite{H11} &&&\\
\noalign{\hrule height1pt}
\end{tabular}
}
\end{table}

\bigskip
\noindent
\textbf{Acknowledgments.}
This work was supported by JSPS KAKENHI Grant Number 19H01802.



\begin{landscape}

\begin{table}[thbp]
\caption{Self-dual $[44,22]$ codes $C_{44,i}$ over $\FF_5$}
\label{Tab:44}
\centering\medskip
{\small
\begin{tabular}{c|c|c||c|c|c}
\noalign{\hrule height1pt}
$i$ & $r_A(C_{44,i})$ & $r_B(C_{44,i})$ &
$i$ & $r_A(C_{44,i})$ & $r_B(C_{44,i})$ \\
\hline
 1&$(1,0,0,4,4,4,0,2,3,1,1)$&$(2,0,1,0,2,0,0,4,3,4,0)$&
26&$(1,0,0,3,2,4,3,2,0,3,1)$&$(2,1,0,1,4,4,1,3,2,3,0)$\\
 2&$(1,0,0,1,2,2,2,3,2,0,4)$&$(4,0,3,4,1,2,0,1,1,2,2)$&
27&$(1,0,0,1,2,1,3,0,4,2,1)$&$(1,1,3,2,3,4,3,2,3,3,4)$\\
 3&$(1,0,0,1,1,3,4,1,1,4,0)$&$(1,3,3,2,2,0,2,2,0,3,2)$&
28&$(1,0,0,4,0,0,1,1,1,0,2)$&$(3,4,3,3,2,3,4,1,2,2,2)$\\
 4&$(1,0,0,4,0,2,1,4,0,1,2)$&$(3,1,0,4,4,1,3,2,0,4,2)$&
29&$(1,0,0,0,4,4,1,1,0,2,4)$&$(1,0,0,3,0,3,0,2,4,4,2)$\\
 5&$(1,0,0,0,3,1,1,4,1,0,1)$&$(1,4,3,2,4,2,0,4,2,0,2)$&
30&$(1,0,0,1,2,1,4,2,1,4,3)$&$(3,3,0,1,1,4,1,4,4,1,4)$\\
 6&$(1,0,0,1,1,0,4,1,2,2,3)$&$(2,1,0,1,0,3,1,0,2,4,4)$&
31&$(1,0,0,0,2,3,3,4,2,0,4)$&$(4,0,3,4,1,2,0,1,4,1,4)$\\
 7&$(1,0,0,3,3,0,3,1,1,4,0)$&$(1,3,3,2,2,0,2,0,1,0,4)$&
32&$(1,0,0,3,1,3,3,1,1,4,0)$&$(1,3,3,2,2,0,2,0,4,0,0)$\\
 8&$(1,0,0,4,1,1,4,1,0,1,3)$&$(1,4,1,0,0,2,1,4,3,4,3)$&
33&$(1,0,0,2,4,1,2,3,4,4,2)$&$(3,2,2,3,4,2,2,0,3,0,3)$\\
 9&$(1,0,0,0,1,4,4,3,4,0,4)$&$(4,0,1,0,2,1,3,1,2,2,2)$&
34&$(1,0,0,0,1,3,2,0,4,4,2)$&$(1,4,1,1,3,1,0,0,1,2,2)$\\
10&$(1,0,0,0,2,3,3,3,2,4,2)$&$(2,1,0,1,3,0,4,2,1,4,1)$&
35&$(1,0,0,4,1,1,2,2,2,4,2)$&$(1,3,1,3,2,0,4,2,2,4,2)$\\
11&$(1,0,0,3,3,2,4,1,1,4,1)$&$(2,2,1,0,0,1,0,1,1,3,0)$&
36&$(1,0,0,2,3,4,4,1,4,2,2)$&$(3,3,1,4,3,3,3,4,0,3,4)$\\
12&$(1,0,0,3,0,3,0,1,0,0,3)$&$(0,4,2,1,0,1,1,3,3,0,3)$&
37&$(1,0,0,1,4,0,4,4,2,0,4)$&$(4,0,3,4,1,2,0,2,3,4,3)$\\
13&$(1,0,0,4,3,4,0,4,2,2,3)$&$(2,3,0,4,2,3,1,4,4,2,0)$&
38&$(1,0,0,2,3,3,3,0,3,0,2)$&$(4,1,2,2,4,4,0,0,1,4,0)$\\
14&$(1,0,0,1,2,1,3,1,1,4,0)$&$(1,3,3,2,2,0,2,0,3,1,2)$&
39&$(1,0,0,4,0,1,3,1,3,0,1)$&$(1,4,1,1,1,0,0,0,2,1,1)$\\
15&$(1,0,0,2,2,1,1,4,0,2,2)$&$(3,0,4,0,0,0,4,4,3,3,3)$&
40&$(1,0,0,1,1,0,3,0,1,4,1)$&$(2,2,1,0,0,1,0,0,3,2,4)$\\
16&$(1,0,0,0,4,3,1,1,1,4,1)$&$(2,2,1,0,0,1,0,3,3,4,2)$&
41&$(1,0,0,0,4,0,3,3,3,2,4)$&$(3,2,0,1,3,3,1,3,0,2,2)$\\
17&$(1,0,0,2,2,0,0,2,2,0,2)$&$(0,0,1,2,2,3,4,3,2,1,0)$&
42&$(1,0,0,4,3,1,2,4,0,2,2)$&$(3,0,4,0,0,0,4,3,2,0,0)$\\
18&$(1,0,0,2,3,0,0,3,2,0,4)$&$(4,0,3,4,1,2,0,2,4,4,2)$&
43&$(1,0,0,4,2,2,1,1,1,0,2)$&$(3,4,3,3,2,3,4,0,4,0,2)$\\
19&$(1,0,0,4,2,4,0,1,0,4,0)$&$(3,2,4,3,3,4,4,4,4,0,2)$&
44&$(1,0,0,4,1,0,4,4,2,2,4)$&$(4,3,0,0,3,4,2,2,1,1,0)$\\
20&$(1,0,0,3,3,3,2,4,1,2,3)$&$(0,0,4,2,2,0,0,2,2,3,4)$&
45&$(1,0,0,4,4,0,4,3,4,3,2)$&$(2,2,1,2,1,3,1,2,3,3,1)$\\
21&$(1,0,0,4,4,1,2,1,0,4,3)$&$(0,3,2,2,4,3,1,2,3,2,0)$&
46&$(1,0,0,0,4,4,3,0,1,1,4)$&$(3,3,3,1,4,1,2,4,3,4,3)$\\
22&$(1,0,0,0,0,3,0,3,4,0,4)$&$(4,0,1,0,2,1,3,2,1,4,1)$&
47&$(1,0,0,0,4,0,2,1,4,2,2)$&$(3,3,1,4,3,3,3,0,4,4,2)$\\
23&$(1,0,0,2,3,1,4,4,1,2,3)$&$(0,0,4,2,2,0,0,0,2,4,3)$&
48&$(1,0,0,1,4,0,4,2,4,2,2)$&$(3,3,1,4,3,3,3,1,0,0,3)$\\
24&$(1,0,0,0,2,3,0,2,1,0,3)$&$(1,0,4,3,0,2,3,2,4,4,4)$&
49&$(1,0,0,4,2,0,4,1,1,3,0)$&$(4,4,3,4,3,2,2,0,0,1,1)$\\
25&$(1,0,0,4,2,0,4,0,3,0,2)$&$(4,1,2,2,4,4,0,1,1,2,4)$&
50&$(1,0,0,3,3,2,1,0,4,0,4)$&$(1,2,2,4,1,4,1,3,3,4,4)$\\
\noalign{\hrule height1pt}
\end{tabular}
}
\end{table}

\begin{table}[thbp]
\caption{Self-dual $[40,20,12]$ codes $C_{40,i}$ over $\FF_5$}
\label{Tab:40}
\centering\medskip
{\small
\begin{tabular}{c|c|c||c|c|c}
\noalign{\hrule height1pt}
$i$ & $r_A(C_{40,i})$ & $r_B(C_{40,i})$ &
$i$ & $r_A(C_{40,i})$ & $r_B(C_{40,i})$ \\
\hline
 1 &$(1,0,0,3,0,3,2,3,3,1)$&$(4,3,0,3,2,2,3,3,1,1)$ &
26 &$(1,0,0,3,3,2,1,3,1,3)$&$(2,3,2,2,2,4,2,1,1,2)$ \\
 2 &$(1,0,0,3,2,3,3,3,4,1)$&$(4,3,0,3,2,2,1,2,2,0)$ &
27 &$(1,0,0,4,0,0,0,1,1,3)$&$(2,3,2,2,2,4,2,0,0,1)$ \\
 3 &$(1,0,0,2,4,0,0,3,2,3)$&$(2,3,2,2,2,4,4,4,2,3)$ &
28 &$(1,0,0,3,3,1,3,3,1,3)$&$(2,3,2,2,2,4,2,0,0,4)$ \\
 4 &$(1,0,0,0,3,1,0,2,1,3)$&$(2,3,2,2,2,4,2,3,3,4)$ &
29 &$(1,0,0,4,3,4,3,0,1,3)$&$(2,3,2,2,2,4,2,1,4,1)$ \\
 5 &$(1,0,0,2,4,4,0,2,3,1)$&$(4,3,0,3,2,2,2,4,1,0)$ &
30 &$(1,0,0,2,2,1,2,0,2,3)$&$(2,3,2,2,2,4,4,0,2,4)$ \\
 6 &$(1,0,0,2,4,3,3,2,1,3)$&$(2,3,2,2,2,4,4,4,2,3)$ &
31 &$(1,0,0,4,1,4,4,4,3,1)$&$(4,3,0,3,2,2,0,1,2,4)$ \\
 7 &$(1,0,0,2,0,2,4,3,3,1)$&$(4,3,0,3,2,2,3,3,4,2)$ &
32 &$(1,0,0,2,2,4,1,1,3,1)$&$(4,3,0,3,2,2,2,1,2,1)$ \\
 8 &$(1,0,0,0,4,1,4,3,4,1)$&$(4,3,0,3,2,2,1,1,2,1)$ &
33 &$(1,0,0,0,4,3,4,3,2,3)$&$(2,3,2,2,2,4,1,2,0,2)$ \\
 9 &$(1,0,0,3,2,1,3,1,2,3)$&$(2,3,2,2,2,4,2,3,4,1)$ &
34 &$(1,0,0,3,2,4,2,1,2,3)$&$(2,3,2,2,2,4,0,1,3,0)$ \\
10 &$(1,0,0,4,3,0,3,4,1,3)$&$(2,3,2,2,2,4,0,0,1,1)$ &
35 &$(1,0,0,4,3,3,3,1,4,1)$&$(4,3,0,3,2,2,1,3,2,4)$ \\
11 &$(1,0,0,0,4,0,3,3,2,3)$&$(2,3,2,2,2,4,0,1,0,3)$ &
36 &$(1,0,0,0,1,0,1,3,1,3)$&$(2,3,2,2,2,4,4,2,1,0)$ \\
12 &$(1,0,0,0,0,1,2,2,4,1)$&$(1,3,1,0,1,3,1,0,4,2)$ &
37 &$(1,0,0,3,4,0,3,2,1,3)$&$(2,3,2,2,2,4,1,2,0,2)$ \\
13 &$(1,0,0,0,4,3,4,1,3,1)$&$(4,3,0,3,2,2,3,2,4,0)$ &
38 &$(1,0,0,0,1,1,4,2,2,3)$&$(2,3,2,2,2,4,3,3,2,0)$ \\
14 &$(1,0,0,4,0,0,2,2,4,1)$&$(4,3,0,3,2,2,4,3,2,4)$ &
39 &$(1,0,0,3,0,1,1,1,1,3)$&$(2,3,2,2,2,4,4,2,0,0)$ \\
15 &$(1,0,0,2,0,2,4,0,3,1)$&$(4,3,0,3,2,2,3,3,0,3)$ &
40 &$(1,0,0,1,3,4,1,0,2,3)$&$(2,3,2,2,2,4,3,3,3,0)$ \\
16 &$(1,0,0,4,0,0,0,2,3,1)$&$(4,3,0,3,2,2,3,1,0,1)$ &
41 &$(1,0,0,1,3,0,0,2,2,3)$&$(2,3,2,2,2,4,0,2,0,4)$ \\
17 &$(1,0,0,4,1,3,3,1,1,3)$&$(2,3,2,2,2,4,0,3,1,1)$ &
42 &$(1,0,0,3,1,3,1,0,2,3)$&$(2,3,2,2,2,4,0,4,3,3)$ \\
18 &$(1,0,0,0,1,2,3,3,3,1)$&$(4,3,0,3,2,2,3,3,0,0)$ &
43 &$(1,0,0,2,0,0,4,1,3,1)$&$(4,3,0,3,2,2,1,3,1,3)$ \\
19 &$(1,0,0,2,3,4,2,4,3,1)$&$(4,3,0,3,2,2,1,1,3,1)$ &
44 &$(1,0,0,4,1,4,1,0,1,3)$&$(2,3,2,2,2,4,4,3,3,3)$ \\
20 &$(1,0,0,0,3,3,2,3,1,3)$&$(2,3,2,2,2,4,3,1,0,1)$ &
45 &$(1,0,0,2,4,3,1,0,2,3)$&$(2,3,2,2,2,4,0,3,2,4)$ \\
21 &$(1,0,0,3,4,1,4,4,1,3)$&$(2,3,2,2,2,4,4,4,1,4)$ &
46 &$(1,0,0,2,0,0,4,1,3,1)$&$(4,3,0,3,2,2,0,0,1,2)$ \\
22 &$(1,0,0,1,0,1,2,1,1,3)$&$(2,3,2,2,2,4,1,4,3,2)$ &
47 &$(1,0,0,1,1,2,1,0,1,3)$&$(2,3,2,2,2,4,3,0,4,0)$ \\
23 &$(1,0,0,3,1,4,3,1,1,3)$&$(2,3,2,2,2,4,0,3,4,1)$ &
48 &$(1,0,0,4,1,3,0,1,1,3)$&$(2,3,2,2,2,4,3,1,3,1)$ \\
24 &$(1,0,0,3,2,3,0,3,1,3)$&$(2,3,2,2,2,4,4,0,4,3)$ &
49 &$(1,0,0,2,3,2,4,3,1,3)$&$(2,3,2,2,2,4,0,2,0,4)$ \\
25 &$(1,0,0,2,1,1,3,3,3,1)$&$(4,3,0,3,2,2,3,0,2,2)$ &
50 &$(1,0,0,4,4,3,4,2,2,3)$&$(2,3,2,2,2,4,1,4,1,0)$ \\
\noalign{\hrule height1pt}
\end{tabular}
}
\end{table}

\end{landscape}

\end{document}